\newcommand{\ket}[1]{\left| {#1} \right\rangle}
\newcommand{\bra}[1]{\left\langle {#1}\right|}
\newcommand{\ketbra}[2]{\ket{#1}\!\bra{#2}}
\newcommand{\braket}[2]{\langle #1 | #2 \rangle}
\renewcommand{\t}[1]{\textrm{#1}}
\newcommand{\mat}[2]{\left( \begin{array}{#1} #2  \end{array}\right)}
\DeclareMathOperator{\id}{id}
\DeclareMathOperator{\distance}{d}
\renewcommand{\d}[1]{\distance\!\left(#1\right)}
\newcommand{\dx}[1]{\distance\!\left[#1\right]}
\renewcommand{\r}{r}
\newcommand{\con}{\mathcal{C}^\theta_{\mathrm{con}}}
\newcommand{\conx}{\mathcal{C}_{\mathrm{con}}}
\newcommand{\intd}[2]{\int\!d{#1}\ {#2}}
\newcommand\gobblepars{\@ifnextchar\par{\expandafter\gobblepars\@gobble}{}}
\newcommand{\sectionprl}[1]{{\it #1.---}\gobblepars}
\newtheorem{lemma}{Lemma}
\theoremstyle{lemma}
\newtheorem{theorem}{Theorem}
\theoremstyle{theorem}
\begin{document}

\title{Using Quantum Metrological Bounds in Quantum Error Correction:\\
A Simple Proof of the Approximate Eastin-Knill Theorem}
\author{Aleksander Kubica}
\affiliation{Perimeter Institute for Theoretical Physics, Waterloo, ON N2L 2Y5, Canada}
\affiliation{Institute for Quantum Computing, University of Waterloo, Waterloo, ON N2L 3G1, Canada}
\author{Rafa{\l} Demkowicz-Dobrza{\'n}ski}
\affiliation{Faculty of Physics, University of Warsaw, Pasteura 5, PL-02-093 Warszawa, Poland}

\begin{abstract}
We present a simple proof of the approximate Eastin-Knill theorem, which connects the quality of a quantum error-correcting code (QECC) with its ability to achieve a universal set of transversal logical gates.
Our derivation employs powerful bounds on the quantum Fisher information in generic quantum metrological protocols to
characterize the QECC performance measured in terms of the worst-case entanglement fidelity.
The theorem is applicable to a large class of decoherence models, including independent erasure and depolarizing noise.
Our approach is unorthodox, as instead of following the established path of utilizing QECCs to mitigate noise in quantum metrological protocols, we apply methods of quantum metrology to explore the limitations of QECCs.
\end{abstract}

\maketitle

Quantum error-correcting codes (QECCs) are an indispensable tool for realizing fault-tolerant universal quantum computation~\cite{Shor1995,Shor1996,Gottesman1996,Campbell2017}.
Using QECCs we can protect encoded information from detrimental effects of decoherence due to unwanted interactions with the environment, as long as the noise is sufficiently weak~\cite{Aharonov1997,Aliferis2005,Aliferis2007}.
Although we want to isolate encoded information from the noisy environment as much as possible, we still wish to be able to easily perform logical operations on it.

One particularly simple way to realize fault-tolerant quantum computation is with the help of QECCs with transversal logical gates~\cite{Bombin2006,Bombin2007,Bombin2013,Kubica2015a,Watson2015,Kubica2015,Vasmer2019}.
Transversal gates, which are tensor products of unitaries acting independently on different subsystems, do not spread errors in an uncontrollable way.
However, there is a no-go theorem, the Eastin-Knill theorem~\cite{Eastin2009,Zeng2011}, which rules out the existence of finite-dimensional quantum error-detecting codes with a universal set of transversal logical gates.

Recently, there has been a lot of interest in quantifying the Eastin-Knill theorem~\cite{Bravyi2013,Pastawski2014,Jochym-OConnor2018}, as well as circumventing it~\cite{Hayden2017,Faist2019,Woods2020,Wang2019}.
In particular, Refs.~\cite{Faist2019,Woods2020} introduce a version of the approximate Eastin-Knill theorem, which connects code's quality measured in terms of worst-case entanglement fidelity with code's ability to achieve a universal set of transversal logical gates.
However, the aforementioned works rely heavily on the technical tools from representation theory and the notion of reference frames, making for intricate proofs and limiting results to elementary noise models.

In our work, we present an alternative and more streamlined way of proving the approximate Eastin-Knill theorem.
Our derivation utilizes powerful bounds on the optimal performance of quantum metrological protocols in the presence of decoherence \cite{Fujiwara2008, Escher2011,Demkowicz2012, Kolodynski2013,Demkowicz2014, Demkowicz2017, Zhou2018, Zhou2020}.
These metrological bounds are easily computable for a large class of noise models.
Thus, our approach, unlike the previous results, is not limited to the erasure noise and can be straightforwardly applied to, for instance, the depolarizing noise.
To the best of our knowledge, our approach is the first example of quantum metrology helping quantum error correction, as until now only error correction techniques were employed in quantum metrological protocols to mitigate noise and achieve better scaling~\cite{Kessler2014a, Dur2014, Arrad2014, Unden2016, Sekatski2017, Zhou2018, Gorecki2019, Datta2019, Layden2019,  Zhou2020a, Zhou2020}.

In what follows, we first discuss QECCs with transversal logical gates.
Then, we briefly review bounds on the quantum Fisher information (QFI) accessible in an arbitrary quantum metrological protocol in the presence of local noise.
Next, we prove (technical) Lemma~\ref{lemma}, which provides a lower bound on the QFI in any
parameter-dependent channel in terms of the Bures distance of that channel from the ideal unitary rotation.
Lemma~\ref{lemma} can be regarded as a result on its own, and we expect it to be useful also beyond the scope of our work.
Lastly, we apply the metrological bounds to limit the quality of the QECCs and arrive at the main result, (approximate Eastin-Knill) Theorem~\ref{theorem}.

\sectionprl{Quantum error-correcting codes}

Consider a QECC, where states and operations on a logical system $L$ are encoded into a physical system $A$, comprising $n$ disjoint finite-dimensional subsystems $A_1,\dots,A_n$, via an encoding channel $\mathcal{E}_{L \rightarrow A}$.
The system $A$ is subject to noise represented by a channel $\mathcal{N}_{A}$.
We consider a local noise model acting on each subsystem independently, i.e.,
\begin{equation}
\mathcal{N}_{A} = \bigotimes_{i=1}^n \mathcal{N}_{A_i}.
\end{equation}
This is a natural assumption
in quantum error correction, and will also allow us to use powerful quantum metrological bounds in a straightforward way \cite{Fujiwara2008, Escher2011,Demkowicz2012, Kolodynski2013,Demkowicz2014, Demkowicz2017, Zhou2018, Zhou2020}.
Let $\mathcal{R}_{A\rightarrow L}$ be a recovery map, which detects and attempts to correct errors
in a way that the composed channel
\begin{equation}
\mathcal{I}_L = \mathcal{R}_{A\rightarrow L} \circ \mathcal{N}_{A} \circ \mathcal{E}_{L\rightarrow A}
 \end{equation}
is as close as possible to the identity channel $\id_L$ on the logical system $L$.
More precisely, we want to maximize the worst-case entanglement fidelity $\mathcal{F}(\mathcal{I}_L,\id_L)$, defined for any two channels $\mathcal{C}$ and $\mathcal{D}$ as follows~\cite{Schumacher1996, Gilchrist2005}
\begin{equation}
\mathcal{F}(\mathcal{C},\mathcal{D})
= \min_{\ket{\Phi}} f[\mathcal{C}\otimes\id_{L'}(\ketbra{\Phi}{\Phi}), \mathcal{D}\otimes\id_{L'}(\ketbra{\Phi}{\Phi})],
\end{equation}
where $\ket{\Phi}$ is an arbitrary (in principle entangled) pure state supported on $L$ and a reference system $L'$, and $f(\rho,\sigma) = \mathrm{Tr}(\sqrt{\!\sqrt{\rho}\sigma\sqrt{\rho}})$ denotes the fidelity~\cite{Uhlmann1976} between two states $\rho$ and $\sigma$.
We use $\mathcal{F}(\mathcal{C},\mathcal{D})$ to define the Bures distance~\cite{Bures1969} between $\mathcal{C}$ and $\mathcal{D}$, namely
\begin{equation}
\d{\mathcal{C},\mathcal{D}} = \sqrt{1 - \mathcal{F}(\mathcal{C},\mathcal{D})}.
\end{equation}
Finally, we say that the code $\mathcal{E}_{L\rightarrow A}$ is $\epsilon$-correctable~\cite{Beny2010} under the noise $\mathcal{N}_{A}$ if
there exists a recovery operation $\mathcal{R}_{A\rightarrow L}$ such that
\begin{equation}
\label{eq:error}
\d{\mathcal{I}_L,\id_L} \leq \epsilon.
\end{equation}

\sectionprl{Formulation of the problem}

We are interested in QECCs with transversal logical gates forming a universal gate set.
By definition, a transversal logical unitary $U_L$ is represented as a tensor product of unitaries on different physical subsystems, i.e., $U_A = \bigotimes_{i=1}^n U_{A_i}$.
First, instead of considering the full universal set of gates, we will restrict our attention to
 a family of logical operators $\{U_L^\theta\}_\theta$ parameterized by $\theta\in[0,2\pi]$ and generated by a Hermitian generator $T_L$ via
\begin{equation}
\label{eq_rotation_logical}
U_L^\theta = \exp(-i \theta T_L).
\end{equation}
Note that $\{U_L^\theta\}_\theta$ may be regarded as a representation of the group $U(1)$.
 In what follows we will focus our attention on $U(1)$-covariant codes.\footnote{
We say that a code $\mathcal{E}_{L\rightarrow A}$ is $G$-covariant iff for any $g\in G$
\begin{equation*}
\mathcal{E}_{L \rightarrow A}(U_L(g)\cdot U_L(g)^\dagger) = U_A(g)
 \mathcal{E}_{L \rightarrow A}(\cdot) U_A(g)^\dagger,
\end{equation*}
where $G$ is a Lie group with representations $U_L(g)$ and $U_A(g)$ acting unitarily on the logical and physical systems.
The notion of $U(d_L)$-covariance, where $d_L$ is the dimension of the logical Hilbert space, is related to a universal transversal gate set; see Appendix E in~\cite{Faist2019}.
Note that $U(d_L)$-covariance implies $U(1)$-covariance.}
Combining covariance and transversality we get
\begin{equation}
\label{eq_covariance}
\mathcal{E}_{L \rightarrow A}\circ \mathcal{U}_L^\theta = \mathcal{U}_A^\theta\circ  \mathcal{E}_{L \rightarrow A},
\end{equation}
where $\mathcal{U}_L^\theta$ and $\mathcal{U}_A^\theta$ denote channels on the systems $L$ and $A$ corresponding to the unitary rotations $U_L^\theta$ and $U_A^\theta$, such that $U_A^\theta$ can be expressed in terms of local Hermitian generators $T_{A_i}$ acting on physical subsystems $A_i$, namely
\begin{equation}
U_A^\theta = \bigotimes_{i=1}^n  U_{A_i}^\theta = \exp\left( -i \theta \sum_{i=1}^n T_{A_i}\right).
\end{equation}

The physical system $A$ is now effectively subject to a channel $\mathcal{N}^\theta_{A}$ defined as
\begin{equation}
\label{eq_noise}
\mathcal{N}_{A}^\theta = \bigotimes_{i=1}^n (\mathcal{N}_{A_i}\circ\mathcal{U}_{A_i}^\theta),
\end{equation}
where $\mathcal{N}_{A_i}$ represents the noise process independent of the parameter $\theta$ and acting on the physical subsystem $A_i$.
Since $\mathcal{N}^\theta_{A}$ is a composition of a $\theta$-dependent unitary rotation channel with local noise, thus it can be viewed as a noisy parameter encoding channel.
Fundamental quantum metrological bounds on the QFI impose restrictions on how much we can learn about $\theta$ by performing any strategy, in particular the optimal one, and this will lead us to limitations on the quality of the QECC.

\sectionprl{Quantum metrological bounds}

A key concept in quantum metrology \cite{Giovannetti2006, Paris2009, Giovannetti2011, Toth2014, Demkowicz2015, Pezze2018, Pirandola2018}
as well as for our discussion of quantum metrological bounds is the quantum Fisher information (QFI).
For a given family of states $\rho^\theta$, the inverse of the QFI $F(\rho^\theta)$ provides a lower bound on the
variance of a parameter $\theta$ for an arbitrary measurement and (locally unbiased) estimation procedure performed on $\rho^\theta$~\cite{Helstrom1976, Holevo1982, Braunstein1994, Hayashi2005}, namely
 \begin{equation}
 \Delta^2 \theta  \geq \frac{1}{F(\rho^\theta)}, \quad F(\rho^\theta)= \t{Tr}\left[\rho^\theta (L^\theta)^2  \right].
 \end{equation}
The definition of the QFI above involves the symmetric logarithmic derivative operator $L^\theta$ implicitly defined via
$\dot{\rho}^\theta= \frac{1}{2}\left(\rho^\theta L^\theta + L^\theta \rho^\theta \right)$, where $\dot{\rho}^\theta$ denotes the derivative of $\rho^\theta$ with respect to $\theta$.
In particular, the QFI for $\rho^\theta = \ketbra{\psi^\theta}{\psi^\theta}$ is given by
\begin{equation}
F(\ket{\psi^\theta}) = 4 \left(\braket{\dot{\psi}^\theta}{\dot{\psi}^\theta}  - \left|\braket{\dot{\psi}^\theta}{\psi^\theta}\right|^2 \right).
\end{equation}
Thus, if the parameter $\theta$ is imprinted unitarily on the pure state $\ket{\psi^\theta} = e^{-i T \theta} \ket{\psi}$, then the QFI is proportional to the variance of the generator $T$, i.e.,
\begin{equation}
F(\ket{\psi^\theta}) = 4( \bra{\psi} T^2 \ket{\psi} - \bra{\psi} T \ket{\psi}^2).
\end{equation}
Intuitively, the QFI quantifies `how fast' a quantum state $\ket{\psi^\theta}$ changes with the change of the parameter $\theta$.
The faster $\ket{\psi^\theta}$ changes the easier it is to estimate $\theta$, since the states $\ket{\psi^\theta}$  and $\ket{\psi^{\theta+d\theta}}$ become more distinguishable for a given small change $d\theta$ of $\theta$.

While for noiseless metrological models one can easily identify fundamental limits as well as the optimal
 protocols that maximize the respective QFI \cite{Bollinger1996, Giovannetti2006}, the same task in the case
 of noisy models is highly nontrivial \cite{Huelga1997, Shaji2007, Dorner2008}.
 This prompted development of efficient methods to compute fundamental metrological
 bounds in the presence of noise \cite{Fujiwara2008, Escher2011,Demkowicz2012, Kolodynski2013,Demkowicz2014, Demkowicz2017, Zhou2018, Zhou2020}.

In order to apply the bounds to our scenario, we first express each channel $\mathcal{N}_{A_i}\circ\mathcal{U}_{A_i}^\theta$ acting on the physical subsystem $A_i$ in terms of its Kraus operators as
\begin{equation}
\mathcal{N}_{A_i}\circ\mathcal{U}_{A_i}^\theta(\cdot) = \sum_{k} K_{i,k}^\theta \cdot K_{i,k}^{\theta\dagger}, \quad K_{i,k}^\theta = K_{i,k} e^{-i T_{A_i} \theta},
\end{equation}
where we explicitly separate the $\theta$-dependent part $e^{-i T_{A_i} \theta}$ and the $\theta$-independent part $K_{i,k}$ of the Kraus operator $K_{i,k}^\theta$.
The bounds are formulated in terms of a minimization over different equivalent Kraus representations $\{K_{i,k}^\theta\}_k$ of $\mathcal{N}_{A_i}\circ\mathcal{U}_{A_i}^\theta$.
Since the noisy channel $\mathcal{N}^\theta_{A}$ acts independently on each of the physical subsystems,
the bounds can be calculated efficiently and depend only on the number of applications of a given single subsystem noisy channel $\mathcal{N}_{A_i}\circ\mathcal{U}_{A_i}^\theta$, irrespectively of the other features of the protocol, such as the initial entanglement between the physical subsystems or additional intermediate measurements and adaptive procedures~\cite{Demkowicz2014}.

\begin{figure}[t!]
\centering
\includegraphics[width=\columnwidth]{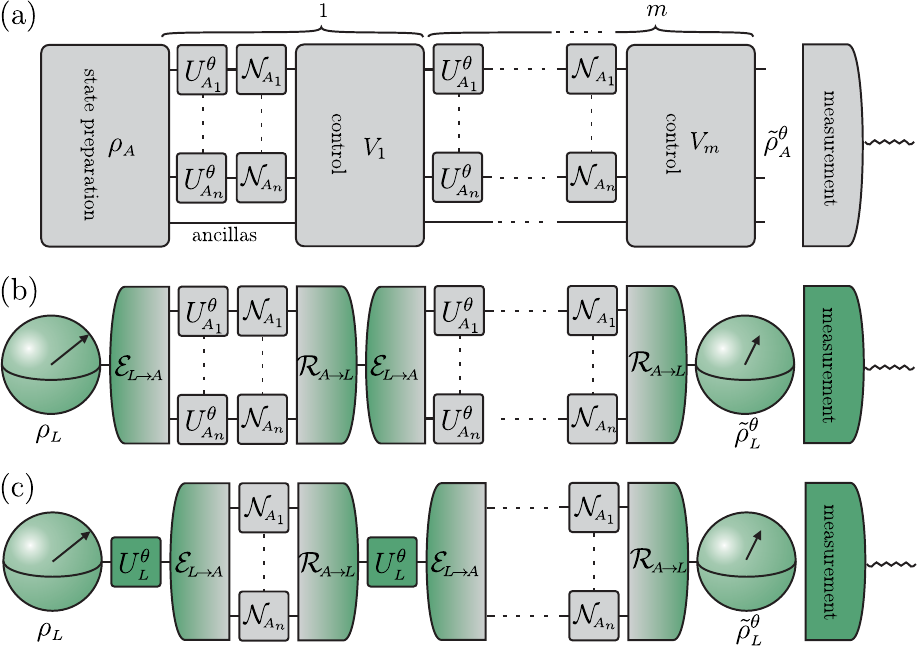}
\caption{
(a) A generic quantum metrological protocol involving $m$ sequential applications of the parameter encoding channel $\bigotimes_{i=1}^n (\mathcal{N}_{A_i}\circ \mathcal{U}_{A_i}^\theta)$ interleaved with the control operations.
(b) A metrological protocol with the initial state $\rho_A = \mathcal{E}_{A\rightarrow L}(\rho_L)$ encoded into a QECC and the control corresponding to a composition of the recovery $\mathcal{R}_{A\rightarrow L}$ and encoding $\mathcal{E}_{A\rightarrow L}$ maps.
(c) Using the covariance property of $\mathcal{E}_{L\rightarrow A}$ we find an equivalent scheme where the parameter-dependent unitary $\bigotimes_{i=1}^n U^\theta_{A_i}$ on the physical system $A$ (shaded in gray) is expressed as $U^\theta_L$ on the logical system $L$ (shaded in green).}
\label{fig_schemes}
\end{figure}
Let us now consider a metrological scheme depicted in Fig.~\ref{fig_schemes}(a), which involves preparing an arbitrary state $\rho_A$, applying $m$ times the noisy channel $\mathcal{N}^\theta_{A}$ interleaved with the (adaptive) control $\{V_i\}_i$, and performing an arbitrary measurement of the output state $\tilde{\rho}^\theta_A$.
The bound on the maximal QFI of the output state takes the form
\begin{equation}
\label{eq_m_uses}
\max_{\rho_A, \{V_i\}_i } F(\tilde{\rho}^\theta_A) \leq m F^{\uparrow},\quad
F^{\uparrow} = 4\sum_{i=1}^n \min_{\{K_{i,k}^\theta\}_k, \beta_i = 0} \| \alpha_i \|,
\end{equation}
where $\alpha_i = \sum_k \dot{K}^{\theta\dagger}_{i,k} \dot{K}^\theta_{i,k}$, $\beta_i = \sum_{k} \dot{K}_{i,k}^\dag K_{i,k}$,
 $\dot{K}^\theta_{i,k}$ is the derivative of $K^\theta_{i,k}$ with respect to $\theta$ and $\| \cdot \|$ denotes the operator norm; for completeness, we include a derivation of the bound in Appendix~\ref{sec:bound}.
Note that the bound holds for any $\theta$ and is nontrivial provided that for all $i=1,\ldots,n$ there is a Kraus representation of the noisy channel $\mathcal{N}_{A_i}\circ\mathcal{U}_{A_i}^\theta$ satisfying $\beta_i=0$.
This is, however, a generic property of almost all realistic noise models including the erasure and depolarizing noise~\cite{Fujiwara2008, Matsumoto2010, Demkowicz2012}.

We remark that if the channels $\mathcal{N}_{A_i}\circ\mathcal{U}_{A_i}^\theta$ are all the same (which is a typical assumption in quantum metrology), then we replace the sum in Eq.~(\ref{eq_m_uses}) by a multiplicative factor $n$.
Thus, the QFI, which in principle might scale quadratically in the number of physical subsystems $n$ (the so-called Heisenberg scaling), is forced in the presence of noise to scale at most linearly.
Also, while the task of minimizing the operator norm over different equivalent Kraus representations in Eq.~(\ref{eq_m_uses}) might look challenging at first, it is only a minimization over Kraus representations $\{K_{i,k}^\theta\}_k$ of the channels $\mathcal{N}_{A_i}\circ\mathcal{U}_{A_i}^\theta$ acting independently on \emph{individual} physical subsystems $A_i$.
Such a minimization for many noise models, including the erasure and depolarizing noise, can either be done analytically or via semi-definite programs providing the optimal solution~\cite{Escher2011, Demkowicz2012, Kolodynski2013, Demkowicz2014, Demkowicz2017, Zhou2018, Zhou2020}.

\sectionprl{Application of metrological bounds to limit the quality of the QECC}

Since the metrological bound in Eq.~(\ref{eq_m_uses}) is valid for any (adaptive) metrological strategy, it is also applicable to the setting, where the initial state $\rho_A = \mathcal{E}_{L\rightarrow A}(\rho_L)$ corresponds to some encoded logical state $\rho_L$ and the control is the (adaptive) recovery operation $\mathcal{R}_{A\rightarrow L}$ followed by the encoding map $\mathcal{E}_{L\rightarrow A}$; see Fig.~\ref{fig_schemes}(b).
On the logical level, we start with an arbitrary logical state $\rho_L$, and apply $m$ times the following channel
\begin{eqnarray}
\mathcal{I}^\theta_L
&=& \mathcal{R}_{A\rightarrow L} \circ \mathcal{N}^\theta_{A} \circ \mathcal{E}_{L\rightarrow A} \\
&=& \mathcal{R}_{A\rightarrow L} \circ \bigotimes_{i=1}^n \mathcal{N}_{A_i} \circ \mathcal{E}_{L\rightarrow A} \circ \mathcal{U}^\theta_L =  \mathcal{I}_L \circ \mathcal{U}^\theta_L,
\end{eqnarray}
where we use the structure of the noisy channel $\mathcal{N}^\theta_{A}$ in Eq.~(\ref{eq_noise}) and covariance of the code $\mathcal{E}_{L\rightarrow A}$ in Eq.~(\ref{eq_covariance}) ending up with the scheme depicted in Fig.~\ref{fig_schemes}(c).

The following lemma is crucial, as it relates the maximal achievable QFI of the output of a channel with the Bures distance of the channel from the ideal unitary encoding channel.
\begin{lemma}
\label{lemma}
Let $\{ e^{-i\theta T} \}_\theta$ be a finite-dimensional representation of the unitary group $U(1)$ with a Hermitian generator $T$ and $\mathcal{U}^\theta$ be a channel corresponding to a unitary $e^{-i\theta T}$.
Then, for any $\theta$-dependent channel $\mathcal{C}^\theta$
\begin{equation}
\max_{\rho,\theta} F[\mathcal{C}^\theta (\rho)]
\geq \min_\theta \left[1- 8 \d{\mathcal{C}^\theta, \mathcal{U}^\theta}^2\right](\Delta T)^2 ,
\end{equation}
where $\Delta T$ is the difference between the maximal and minimal eigenvalues of $T$.
\end{lemma}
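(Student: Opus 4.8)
The plan is to evaluate the left-hand side on the probe that is optimal for the noiseless problem and to exploit the fact that the orbit $\{\mathcal{U}^\theta\}$ sweeps out a \emph{closed}, winding curve in state space. First I would take $\rho=\ketbra{\psi}{\psi}$ with $\ket{\psi}=\tfrac{1}{\sqrt2}\big(\ket{t_{\max}}+\ket{t_{\min}}\big)$, the equal superposition of eigenvectors of $T$ with largest and smallest eigenvalues. For this probe the ideal output $\sigma^\theta=\mathcal{U}^\theta(\rho)$ is pure, has QFI exactly $(\Delta T)^2$ (the variance of $T$ in $\ket{\psi}$ is $(\Delta T/2)^2$, and the unitary formula $F=4\,\mathrm{Var}(T)$ applies), and as $\theta$ runs over one period $P=2\pi/\Delta T$ the relative phase $e^{-i\Delta T\theta}$ completes a full turn, so $\sigma^\theta$ traces a full great circle of pure states and returns to its start.

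Next I would translate the QFI into a statement about the \emph{length} of the output trajectory. Since the QFI equals four times the squared Bures speed, $\int_0^P\!\sqrt{F(\rho^\theta)}\,d\theta=2\ell$, where $\ell$ is the Bures length of the noisy orbit $\rho^\theta=\mathcal{C}^\theta(\rho)$. Using $\max_\theta F\ge\tfrac1P\int_0^P F\,d\theta$ together with Cauchy--Schwarz, $\int_0^P F\,d\theta\ge\tfrac1P\big(\int_0^P\!\sqrt F\,d\theta\big)^2$, gives $\max_\theta F(\rho^\theta)\ge 4\ell^2/P^2$, an estimate that is saturated on the ideal orbit and returns precisely $(\Delta T)^2$. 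It then remains only to bound the length $\ell$ of the noisy orbit from below.

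The heart of the argument is the geometric observation that a closed, winding curve confined to a thin tube around the ideal orbit cannot be much shorter than that orbit. Because $\d{\mathcal{C}^\theta,\mathcal{U}^\theta}$ controls the Bures distance of $\rho^\theta$ from $\sigma^\theta$ at every $\theta$, the noisy trajectory stays inside a tube of radius set by $\delta=\max_\theta\d{\mathcal{C}^\theta,\mathcal{U}^\theta}$, yet it must still wind once with the phase of $T$. Splitting the Bures line element into a longitudinal (winding) and a transverse (radial) part and bounding the winding contribution by the total accumulated phase, I expect an estimate of the form $\ell\ge\ell_\sigma(1-c\,\delta^2)$, with $c$ absorbing the mixed-state Bures geometry; inserting this into $\max_\theta F\ge 4\ell^2/P^2$ yields $(\Delta T)^2(1-8\delta^2)$ after taking the worst case over $\theta$. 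The decisive point is that the closed-loop structure supplied by the $U(1)$ covariance is what makes the length deficit \emph{quadratic} in $\delta$.

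The main obstacle is precisely this last step and its sharp constant. A naive finite-difference estimate — bounding the distinguishability of $\rho^\theta$ and $\rho^{\theta+\pi/\Delta T}$ by the triangle inequality and invoking the QFI--fidelity relation — produces only a penalty \emph{linear} in $\delta$ with prefactor $8/\pi^2$, and is hopelessly weak; explicit models (e.g.\ dephasing, which saturates the bound) confirm that the true degradation is of order $\delta^2=1-\mathcal{F}$. Obtaining the quadratic dependence with the constant $8$ demands an argument that is insensitive to purely coherent deviations, since $\theta$-independent unitary errors leave the QFI unchanged by its monotonicity under parameter-independent channels; equivalently, such errors merely tilt the tube without shortening the winding loop. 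Making the tube-plus-winding estimate rigorous in the mixed-state Bures manifold, and verifying that the extremal case reproduces the factor $8$, is where I expect the real work to lie.
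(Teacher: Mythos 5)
Your choice of probe state $\ket{\psi}=(\ket{t_+}+\ket{t_-})/\sqrt{2}$ is exactly the one the paper uses, and your geometric intuition --- that the $U(1)$ structure forces the output trajectory to wind, and that the winding is what makes the QFI deficit quadratic rather than linear in $\delta$ --- is sound. But the proof has a genuine gap at precisely the point you flag: the estimate $\ell\geq\ell_\sigma(1-c\,\delta^2)$ for a winding curve confined to a Bures tube of radius $\delta$ is asserted, not proven, and it is not a routine technicality. In flat geometry the analogous deficit is \emph{linear} in the tube radius (the circle of radius $1-\delta$ winds once and is shorter by $2\pi\delta$); the quadratic deficit you need reflects the fact that the ideal orbit is a geodesic of the pure-state metric, and transplanting that to the mixed-state Bures manifold of an arbitrary (possibly much larger) output space --- where one must first define what ``winding once'' means for a mixed trajectory and then control the longitudinal/transverse splitting quantitatively, with the sharp constant --- is essentially the entire content of the lemma. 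A second, independent problem: the lemma assumes nothing about $\mathcal{C}^\theta$ beyond $\theta$-dependence, so the noisy trajectory need not be closed or periodic. Your argument leans on closedness (``a closed, winding curve''); without it the endpoint mismatch contributes an $O(\delta)$ correction to the length per winding, which by itself destroys the quadratic deficit.

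The paper sidesteps both difficulties with an algebraic device: it replaces $\mathcal{C}^\theta$ by its $U(1)$-convolution $\con=\intd{\theta'}{\mathcal{U}^{-\theta'}\circ\mathcal{C}^{\theta+\theta'}}=\mathcal{U}^\theta\circ\conx$ with $\conx$ independent of $\theta$. Convexity of the QFI gives $F[\con(\rho)]\leq\max_\theta F[\mathcal{C}^\theta(\rho)]$, and concavity of the (squared) fidelity gives $\d{\con,\mathcal{U}^\theta}\leq\max_\theta\d{\mathcal{C}^\theta,\mathcal{U}^\theta}$, so it suffices to prove the bound for the exactly covariant channel $\con$, whose output is a genuine unitary orbit $e^{-i\theta T}\conx(\ketbra{\psi}{\psi})e^{i\theta T}$ of a fixed mixed state. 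Dephasing onto $\t{span}(\ket{t_+},\ket{t_-})$ (which commutes with $\mathcal{U}^\theta$ and can only lower the QFI) then reduces everything to an explicit $2\times2$ computation from which the constant $8$ falls out. If you want to salvage your route, this convolution step is the tool that simultaneously restores periodicity and collapses the tube argument to a single exactly solvable orbit.
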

We present a proof of a stronger version of Lemma~\ref{lemma} in Appendix~\ref{sec_lemma}.
We also stress that in Lemma~\ref{lemma} we do not assume any covariance of the channel $\mathcal{C}^\theta$.
The following theorem is our main result.
\begin{theorem}[approximate Eastin-Knill]
\label{theorem}
Let $\mathcal{E}_{L\rightarrow A}$ be a  $U(1)$-covariant code transversally encoding logical gates $U_L^\theta = e^{-i\theta T_L}$ and the noise $\mathcal{N}_A = \bigotimes_{i=1}^n \mathcal{N}_{A_i}$ act independently on each physical subsystem $A_i$.
If $\mathcal{E}_{L\rightarrow A}$ is $\epsilon$-correctable for $\mathcal{N}_A$, then
\begin{equation}
\label{eq_thm_bound}
\epsilon \geq \frac{(\Delta T_L)^2}{3\sqrt{6}F^{\uparrow}},
\end{equation}
where $F^{\uparrow}$ is the bound on the QFI of the corresponding metrological model given by Eq.~\eqref{eq_m_uses}.
\end{theorem}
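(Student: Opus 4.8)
The plan is to turn the $\epsilon$-correctable code into an explicit metrological strategy, apply the lower bound of Lemma~\ref{lemma} and the upper bound of Eq.~\eqref{eq_m_uses} to the \emph{same} quantity, and finally optimize over the number of channel uses. Concretely, I would consider the $m$-use protocol of Fig.~\ref{fig_schemes}(b): prepare $\rho_A=\mathcal{E}_{L\rightarrow A}(\rho_L)$, apply $\mathcal{N}^\theta_A$ a total of $m$ times interleaved with the control $\mathcal{E}_{L\rightarrow A}\circ\mathcal{R}_{A\rightarrow L}$, and read out on $L$. Using the structure of $\mathcal{N}^\theta_A$ in Eq.~\eqref{eq_noise} and covariance Eq.~\eqref{eq_covariance} exactly as in the derivation of $\mathcal{I}^\theta_L=\mathcal{I}_L\circ\mathcal{U}^\theta_L$, each round acts on $L$ as $\mathcal{I}^\theta_L$, so the logical output is $(\mathcal{I}^\theta_L)^{\circ m}(\rho_L)=(\mathcal{I}_L\circ\mathcal{U}^\theta_L)^{\circ m}(\rho_L)$. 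Since this state is obtained from the physical output $\tilde\rho^\theta_A$ by the final, $\theta$-independent recovery $\mathcal{R}_{A\rightarrow L}$, monotonicity of the QFI under channels together with Eq.~\eqref{eq_m_uses} yields
\begin{equation}
\max_{\rho_L,\theta} F\!\left[(\mathcal{I}^\theta_L)^{\circ m}(\rho_L)\right]\leq m F^{\uparrow}.
\end{equation}

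For the matching lower bound I would apply Lemma~\ref{lemma} to $\mathcal{C}^\theta=(\mathcal{I}^\theta_L)^{\circ m}$, whose natural reference is $(\mathcal{U}^\theta_L)^{\circ m}=\mathcal{U}^{m\theta}_L$, generated by $mT_L$ (still a $U(1)$ representation, as the integer charges of $T_L$ scale to integer charges of $mT_L$), so that $\Delta T=m\,\Delta T_L$. The missing ingredient is control of $\d{(\mathcal{I}^\theta_L)^{\circ m},\mathcal{U}^{m\theta}_L}$. Writing $\mathcal{A}^\theta=\mathcal{I}_L\circ\mathcal{U}^\theta_L$ and $\mathcal{B}^\theta=\mathcal{U}^\theta_L$, a telescoping argument based on the triangle inequality and the contractivity of the Bures distance under left- and right-composition with channels gives $\d{(\mathcal{A}^\theta)^{\circ m},(\mathcal{B}^\theta)^{\circ m}}\leq m\,\d{\mathcal{A}^\theta,\mathcal{B}^\theta}$, while invariance of the distance under the right unitary factor gives $\d{\mathcal{A}^\theta,\mathcal{B}^\theta}=\d{\mathcal{I}_L,\id_L}\leq\epsilon$ for every $\theta$. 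Feeding $\d{\mathcal{C}^\theta,\mathcal{U}^{m\theta}_L}\leq m\epsilon$ into Lemma~\ref{lemma} produces
\begin{equation}
\max_{\rho_L,\theta} F\!\left[(\mathcal{I}^\theta_L)^{\circ m}(\rho_L)\right]\geq \left(1-8m^2\epsilon^2\right)m^2(\Delta T_L)^2.
\end{equation}

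Combining the two bounds cancels one power of $m$ and leaves the single inequality $m\,(1-8m^2\epsilon^2)(\Delta T_L)^2\leq F^{\uparrow}$, valid for every admissible $m$. The last step is to pick $m$ so as to make the left-hand side as large as possible: maximizing $h(m)=m-8m^3\epsilon^2$ gives $m_\star=1/(2\sqrt{6}\,\epsilon)$ with $h(m_\star)=1/(3\sqrt{6}\,\epsilon)$, and substituting this back rearranges precisely into $\epsilon\geq(\Delta T_L)^2/(3\sqrt{6}\,F^{\uparrow})$, as claimed.

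I expect the main obstacle to be twofold. The conceptual point is to justify the chaining inequality for the worst-case entanglement-fidelity Bures distance cleanly: left-composition is contractive by data processing, whereas right-composition with a fixed channel cannot decrease the worst-case fidelity because the induced input states form a subset of all states and purification recovers the full minimization; this is exactly what legitimizes replacing the single-shot error $\epsilon$ by $m\epsilon$ after $m$ rounds. The more technical point is that $m$ must be a positive integer, while $m_\star$ generally is not; one has to argue that in the relevant small-$\epsilon$ regime $m_\star$ is large, so rounding to a nearby integer costs only a lower-order correction and the clean constant $3\sqrt{6}$ is preserved (and, outside that regime, the bound is either vacuous or easily handled separately).
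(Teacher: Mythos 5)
Your proposal is correct and follows essentially the same route as the paper: apply Lemma~\ref{lemma} to $(\mathcal{I}^\theta_L)^{m}$ with reference $\mathcal{U}^{m\theta}_L$ and generator $mT_L$, upper-bound the same QFI by $mF^{\uparrow}$ via Eq.~\eqref{eq_m_uses}, control the Bures distance by $m\epsilon$ through the telescoping/contractivity/unitary-invariance argument, and optimize over $m$ (your $m_\star=1/(2\sqrt{6}\,\epsilon)$ and the paper's $m=3F^{\uparrow}/(2(\Delta T_L)^2)$ coincide at the optimum and yield the same constant $3\sqrt{6}$). Your additional remarks on justifying right-composition monotonicity and on the integrality of $m$ address points the paper leaves implicit, but they do not change the argument.
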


\begin{proof}
Applying Lemma~\ref{lemma} to the channels $\mathcal{C}^\theta = \left(\mathcal{I}^{\theta}_L\right)^m$ and
$\mathcal{U}^{\theta}= \mathcal{U}_L^{m \theta}$, where $T = m T_L$, we get
\begin{multline}
\label{eq_inequality}
\min_\theta \left\{1-8\dx{(\mathcal{I}^\theta_L)^m ,\mathcal{U}_L^{m\theta}}^2\right\}(m\Delta T_L)^2 \leq\\
\leq \max_{\rho_L,\theta} F[(\mathcal{I}^\theta_L)^m (\rho_L)] \leq  \max_{\rho_A,\theta,\{V_i\}_i} F(\tilde{\rho}^\theta_A) \leq mF^{\uparrow},
\end{multline}
where in the second line we follow the reasoning depicted in Fig.~\ref{fig_schemes}, i.e., that $F[(\mathcal{I}^\theta_L)^m (\rho_L)]$ can be upper-bounded by the respective metrological bound in Eq.~(\ref{eq_m_uses}).
We utilize the triangle inequality, contractive property and unitary invariance of the Bures distance to obtain
\begin{equation}
\label{eq_bures_bound}
\d{(\mathcal{I}^\theta_L)^m ,\mathcal{U}_L^{m\theta}} \leq m\d{\mathcal{I}_L,\id_L} \leq m\epsilon,
\end{equation}
where in the last step we use the fact that the code $\mathcal{E}_{L\rightarrow A}$ is $\epsilon$-correctable for the noise $\mathcal{N}_{A}$.
Note that the inequality in Eq.~\eqref{eq_bures_bound} holds for any $\theta$.
This, in turn, leads to
\begin{equation}
\label{eq:theoremm}
(m\Delta T_L)^2 \left(1- 8m^2 \epsilon^2\right) \leq mF^{\uparrow}.
\end{equation}
To get the tightest bound on $\epsilon$ we choose $m = 3 F^{\uparrow}/(2 (\Delta T_L)^2)$ and obtain the bound in Eq.~(\ref{eq_thm_bound}).
\end{proof}

We remark that for the most relevant noise models the corresponding metrological bound in Eq.~\eqref{eq_m_uses} is
\begin{equation}
\label{eq_F_form}
F^{\uparrow} =  \sum^n_{i=1}(\Delta T_{A_i})^2  g(\mathcal{N}_{A_i}),
\end{equation}
where $g (\mathcal{N}_{A_i})$ is a function that depends on the parameters of the noise.
In particular, we show in Appendix~\ref{sec:bound} that for the erasure noise (the physical subsystem $A_i$ is erased with probability $p_i$ in a heralded way) and the depolarizing noise (which replaces the state of $A_i$ by the maximally mixed state with probability $p_i$) we can choose
\begin{equation}
\label{eq_g_form}
g(\mathcal{N}_{A_i})= (1-p_i)/p_i.
\end{equation}
We can then contrast our bound in Eq.~(\ref{eq_thm_bound}) with the bound in Theorem 2 in Ref.~\cite{Faist2019}.
In both cases, for the independent erasure noise with constant loss probability, we obtain the $1/n$ scaling of the bound.
However, our Theorem~\ref{theorem}, unlike the results in Ref.~\cite{Faist2019}, is applicable to noise models beyond the erasure noise.

Analogously as in Ref.~\cite{Faist2019}, we also consider $SU(2)$-covariant codes encoding one logical qubit.
In such cases the spectra of the unitary group generators, which are equivalent to angular momentum operators, are restricted, i.e., $\Delta T_L = 1$ and $\Delta T_{A_i} \leq d_{A_i} - 1$, where $d_{A_i}$ is the dimension of the physical subsystem $A_i$.
We can then obtain the bound
\begin{eqnarray}
\epsilon \geq \frac{1}{3\sqrt{6}\sum_{i=1}^n (d_{A_i} -1)^2 g(\mathcal{N}_{A_i})}.
\end{eqnarray}
Thus, if $(d_{A_i} -1)^2 g(\mathcal{N}_{A_i})$ is upper-bounded by some constant independent of $n$, then we cannot reduce $\epsilon$ of the $\epsilon$-correctable code faster than at the $1/n$ rate.

\sectionprl{Discussion}

We remark that a version of the Eastin-Knill theorem, i.e.,
\emph{there does not exist a finite-dimensional $SU(d_L)$-covariant code with transversal logical gates and code distance $D>2$, where $d_L$ is the dimension of the logical system $L$},
follows from (approximate Eastin-Knill) Theorem~\ref{theorem}.
Namely, assume the contrapositive.
Such a code is also $U(1)$-covariant, and thus the bound in Eq.~(\ref{eq_thm_bound}) holds.
For the independent erasure noise with loss probability $p_i = p \ll 1$, the left-hand side of Eq.~(\ref{eq_thm_bound}) scales as $p^{\lfloor\frac{D+1}{2}\rfloor}$, because the code is guaranteed to correct any error of weight up to $\lfloor\frac{D-1}{2}\rfloor$.
However, using Eqs.~\eqref{eq_F_form}~and~\eqref{eq_g_form} we obtain that the right-hand side of Eq.~(\ref{eq_thm_bound}) scales as $p$.
Thus, for sufficiently small $p$ the bound in Eq.~(\ref{eq_thm_bound}) is violated, leading to a contradiction and
showing the Eastin-Knill theorem.

We emphasize that the bound in Eq.~\eqref{eq_thm_bound} in Theorem~\ref{theorem} can be derived for an
arbitrary local noise model as long as $F^{\uparrow}$ in Eq.~\eqref{eq_m_uses} is finite, which, in turn, corresponds to the condition that for every $i=1,\ldots,n$ there exists a Kraus representation $\{ K_{i,k}^\theta\}_i$ of $\mathcal{N}_{A_i}\circ\mathcal{U}_{A_i}^\theta$ satisfying $\beta_i=0$.
When this condition is not satisfied, the Heisenberg scaling of precision may in principle be attained~\cite{Kessler2014a, Dur2014, Arrad2014, Sekatski2017, Zhou2018, Zhou2020a},
as it is the case for $T_L \propto \sigma^Z$ and noise with Kraus operators in the span of $\{ \openone, \sigma^X\}$.
In such a scenario, the right-hand side of the bound in Eq.~\eqref{eq_m_uses} would scale quadratically in $m$ and our reasoning in Eq.~\eqref{eq_inequality} would fail to yield any meaningful restriction on $\epsilon$.
This is in agreement with the existence of the three-qubit repetition code, which can correct any single-qubit bit-flip error and at the same time has logical gates $U^\theta_L = e^{-i\theta\sigma_L^Z}$ implemented transversally via $U^\theta_A = e^{-i\theta(\sigma_1^Z+\sigma_2^Z+\sigma_3^Z)}$.

Finally, we believe that studies of correlated noise models in quantum metrological problems~\cite{Jeske_2014, layden2018spatial, Czajkowski2019, Chabuda2020}
may allow us to use our proof techniques beyond the setting of local noise independently affecting each physical subsystem.

\sectionprl{Acknowledgements}

The authors thank Aidan Chatwin-Davies, Philippe Faist, Felix Leditzky, Tobias J. Osborne, Amit Kumar Pal and Fernando Pastawski for valuable discussions.
A.K. acknowledges funding provided by the Simons Foundation through the ``It from Qubit'' Collaboration.
Research at Perimeter Institute is supported in part by the Government of Canada through the Department of Innovation, Science and Economic Development Canada and by the Province of Ontario through the Ministry of Colleges and Universities.
RDD acknowledges support from the National Science Center (Poland) grant No.\ 2016/22/E/ST2/00559.

\onecolumngrid
\appendix

\section{Proof of Lemma~\ref{lemma}}
\label{sec_lemma}

Below we prove a stronger version of Lemma~\ref{lemma} from the main text.
Instead of maximizing the QFI on the left-hand side of Eq.~\eqref{eq_thm_bound} over input states $\rho$, we explicitly provide a concrete state for which the bound holds.
\setcounter{lemma}{0}
\begin{lemma}[stronger version]
Let $\{ e^{-i\theta T} \}_\theta$ be a finite-dimensional representation of the unitary group $U(1)$ with a Hermitian generator $T$ and $\mathcal{U}^\theta$ be a channel corresponding to a rotation $e^{-i\theta T}$.
Let $\ket{t_+}$ and $\ket{t_-}$ be two orthogonal and normalized eigenvectors of $T$, which correspond to the maximal $t_+$ and minimal $t_-$ eigenvalues of $T$.
Then, for $\ket{\psi} = \frac{\ket{t_+} + \ket{t_-}}{\sqrt 2}$ and any $\theta$-dependent channel $\mathcal{C}^\theta$ we have
\begin{equation}
\label{eq_lemma_strong}
\max_\theta F[\mathcal{C}^\theta (\ketbra{\psi}{\psi})]
\geq \min_\theta \left[1- 8 \d{\mathcal{C}^\theta, \mathcal{U}^\theta}^2\right](t_+ - t_-)^2.
\end{equation}
\end{lemma}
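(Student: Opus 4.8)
The plan is to recast the statement as a geometric fact: the noisy output trajectory $\theta\mapsto\rho_C^\theta:=\mathcal{C}^\theta(\ketbra{\psi}{\psi})$ is trapped in a thin tube around a maximal‑speed geodesic loop, and therefore must itself move quickly somewhere, which forces a large QFI. First I would pin down the ideal side. Because $\ket{\psi}=(\ket{t_+}+\ket{t_-})/\sqrt2$ is an equal superposition of the extremal eigenvectors of $T$, the variance of $T$ in $\ket{\psi}$ is $(t_+-t_-)^2/4$, so $F[\mathcal{U}^\theta(\ketbra{\psi}{\psi})]=4(\Delta T)^2=(t_+-t_-)^2$ for every $\theta$; this is the maximal value allowed by the spectrum of $T$ and is exactly why this $\ket{\psi}$ is chosen. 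Geometrically the pure states $\ket{\psi^\theta}=e^{-i\theta T}\ket{\psi}$ trace out a great circle on the Bloch sphere of $\mathrm{span}(\ket{t_+},\ket{t_-})$ at constant speed, completing one loop as $\theta$ advances by $2\pi/(t_+-t_-)$.

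The vehicle connecting QFI to distance is the infinitesimal identity relating the Fisher information to the Bures line element: along any smooth family the Bures‑angle arc length obeys $d\ell=\tfrac12\sqrt{F[\rho^\theta]}\,d\theta$, where I measure length in the geodesic Bures angle $\arccos f$ (related to the paper's $\d{\cdot,\cdot}=\sqrt{1-\mathcal{F}}$ through $\arccos(1-\d{\cdot,\cdot}^2)$). Integrating over an interval of length $L$ gives $\int d\ell\le\tfrac12\sqrt{\max_\theta F[\rho_C^\theta]}\,L$, so a lower bound on the length swept by $\rho_C^\theta$ converts directly into the desired lower bound on $\max_\theta F$. The whole problem is thus reduced to showing the perturbed trajectory is long.

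Here is the crux. Writing $D:=\max_\theta\d{\mathcal{C}^\theta,\mathcal{U}^\theta}$, the hypothesis together with the fact that the worst‑case channel fidelity lower‑bounds the fidelity for the particular input $\ket{\psi}$ gives $f[\rho_C^\theta,\ketbra{\psi^\theta}{\psi^\theta}]\ge 1-D^2$, i.e.\ $\rho_C^\theta$ stays within Bures angle $r=\arccos(1-D^2)$ of the winding ideal state for all $\theta$. Hence $\rho_C^\theta$ lies in a tube of angular radius $r$ about the equatorial great circle and must wind around it at the same asymptotic rate $t_+-t_-$ as the ideal curve. A curve confined to such a tube and winding at rate $\omega$ has Bures‑length rate at least $\tfrac12\,\omega\cos(2r)$ — the ``latitude'' factor $\cos(2r)$ is the point where the defect enters only at second order in $r$, hence only at order $D^2$. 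Averaging the length inequality over a growing number of periods removes the boundary term coming from the fact that $\mathcal{C}^\theta$ need not be exactly periodic, and yields $\sqrt{\max_\theta F[\rho_C^\theta]}\ge(t_+-t_-)\cos(2r)$. Squaring and using the elementary inequality $\cos^2\!\big(2\arccos(1-D^2)\big)\ge 1-8D^2$ (which reduces to $(b-1)(b-2)\ge0$ with $b=(1-D^2)^2$) delivers $\max_\theta F\ge(1-8D^2)(t_+-t_-)^2$, which is the claim because $(t_+-t_-)^2$ is $\theta$‑independent so $\min_\theta[1-8\d{\mathcal{C}^\theta,\mathcal{U}^\theta}^2]=1-8D^2$.

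I expect the main obstacle to be making the winding step rigorous for general outputs: $\rho_C^\theta$ lives on the full logical space and may be mixed, so the clean Bloch‑sphere picture (tube, latitude factor, rotation number) has to be justified in the genuine Bures geometry of the state manifold. I would handle this by defining a continuous rotation number for $\rho_C^\theta$ — for instance via the phase of the off‑diagonal element in the $\{\ket{t_+},\ket{t_-}\}$ block, or of a suitable overlap functional — and showing it tracks $(t_+-t_-)\theta$ to within $2r$, thereby reducing the length estimate to the effective qubit subspace. Controlling both the leakage of $\rho_C^\theta$ out of that subspace and its mixedness, each of which is governed by $D$, is the delicate part that the clean two‑dimensional argument glosses over.
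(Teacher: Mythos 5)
Your route is genuinely different from the paper's: you argue directly on the trajectory $\theta\mapsto\rho_C^\theta$ via arc length and a winding number, whereas the paper first convolves $\mathcal{C}^\theta$ over the group to produce $\mathcal{U}^\theta\circ\mathcal{C}_{\mathrm{con}}$ --- a \emph{fixed} state rotated unitarily --- and then computes the QFI in closed form after dephasing onto $\mathrm{span}(\ket{t_+},\ket{t_-})$ and its complement. The scaffolding of your argument is sound: the relation $d\ell=\tfrac12\sqrt{F}\,d\theta$, the conversion of total length into $\max_\theta F$, the winding count via the phase of $\bra{t_+}\rho_C^\theta\ket{t_-}$ (which is indeed pinned to within $\pi/2$ of $-(t_+-t_-)\theta$ whenever the fidelity constraint is nontrivial, i.e.\ $\d{\mathcal{C}^\theta,\mathcal{U}^\theta}^2<1/8$), and the closing algebra all check out and reproduce the constant $8$. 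But the pivotal step --- ``a curve confined to a Bures tube of radius $r$ about the rotating pure state and winding at rate $\omega$ has speed at least $\tfrac12\omega\cos 2r$'' --- is asserted, not proven, and as a statement about general mixed trajectories it is false: in the Bures geometry a qubit state with transverse Bloch-vector length $s_\perp$ whose coherence phase rotates at rate $\omega$ has speed only $\tfrac12 s_\perp\omega$, so a nearly dephased state can wind arbitrarily fast while barely moving. The tube constraint rescues you only because $f^2=\bra{\psi^\theta}\rho_C^\theta\ket{\psi^\theta}\geq(1-\d{\mathcal{C}^\theta,\mathcal{U}^\theta}^2)^2$ forces $\lvert\bra{t_+}\rho_C^\theta\ket{t_-}\rvert\geq f^2-\tfrac12$, and one must then lower-bound the full QFI by the population-weighted QFI of the $\{\ket{t_+},\ket{t_-}\}$ block (handling leakage out of that subspace and the mixedness within it). Writing that out is precisely the paper's decohering-channel-plus-block computation, so the ``delicate part'' you flag at the end is not a technicality to be cleaned up later --- it is the entire content of the lemma.

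Two further points would need attention even granting the speed bound. First, the exact winding count $t_+-t_-$ per period requires $\mathcal{C}^\theta$ to be $2\pi$-periodic; without it the boundary mismatch costs up to $\pi$ of accumulated phase over $[0,2\pi]$, which for $t_+-t_-=1$ halves your lower bound on $\sqrt{\max_\theta F}$ and quarters the bound on $F$, and your proposed fix of ``averaging over many periods'' has no meaning for a channel defined only on one period. (Periodicity does hold in the paper's application, and is implicitly used in its change of variables, so this is repairable by stating the hypothesis.) Second, your arc-length integral needs $\theta\mapsto\rho_C^\theta$ to be absolutely continuous with $F$ controlling the metric derivative everywhere, including at points where eigenvalues of $\rho_C^\theta$ cross zero; the paper's convolution trick sidesteps all regularity issues because the averaged channel is automatically smooth in $\theta$. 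What your approach would buy, if completed, is a more geometric and arguably more illuminating picture (and it avoids the Haar average); what the paper's approach buys is that the only computation ever needed is the QFI of a unitarily rotated fixed two-level block, for which there is an exact formula.
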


\begin{proof}
We start by defining a $U(1)$-convolved channel $\con$ for the channel $\mathcal{C}^\theta$ in the following way
\begin{equation}
\con = \intd{\theta'}{\mathcal{U}^{-\theta'}\circ\mathcal{C}^{\theta+\theta'}},
\end{equation}
where $d\theta'$ corresponds to the normalized Haar measure of the group $U(1)$, i.e., $\int \t{d} \theta' = 1$ and the interval of integration is $[0,2\pi]$.
By making a substitution $\theta'' = \theta + \theta'$ we obtain
\begin{equation}
\con = \intd{\theta''}{\mathcal{U}^{\theta}\circ\mathcal{U}^{-\theta''}\circ\mathcal{C}^{\theta''}}
= \mathcal{U}^{\theta}\circ\conx,
\end{equation}
where we define a $\theta$-independent channel $\conx = \intd{\theta'}{\mathcal{U}^{-\theta'}\circ\mathcal{C}^{\theta'}}$.

We can show that the Bures distance between the $U(1)$-convolved channel $\con$ and the unitary rotation channel $\mathcal{U}^\theta$ is upper-bounded by the distance between $\mathcal{C}^\theta$ and $\mathcal{U}^\theta$ maximized over $\theta$, i.e.,
\begin{equation}
\label{eq_dist_con}
\d{\con,\mathcal{U}^\theta} \leq \max_\theta \d{\mathcal{C}^\theta,\mathcal{U}^\theta}.
\end{equation}
We infer the above inequality from the following sequence of equalities and inequalities
\begin{eqnarray}
\left[1-\d{\con,\mathcal{U}^\theta}^2\right]^2
&=& [\mathcal{F}(\con, \mathcal{U}^\theta)]^2 = [\mathcal{F}(\conx,\id)]^2
= \min_{\ket{\Phi}} \bra{\Phi} \left[\conx(\ketbra{\Phi}{\Phi})\right]\ket{\Phi}\\
&=& \min_{\ket{\Phi}} \bra{\Phi} \left[\intd{\theta'}
{\mathcal{U}^{-\theta'}\circ\mathcal{C}^{\theta'}(\ketbra{\Phi}{\Phi})}\right] \ket{\Phi}
\geq \intd{\theta'}{\min_{\ket{\Phi}} \bra{\Phi}
\left[\mathcal{U}^{-\theta'}\circ\mathcal{C}^{\theta'} (\ketbra{\Phi}{\Phi})\right] \ket{\Phi}}\\
&=& \intd{\theta'}{\left[\mathcal{F}(\mathcal{U}^{-\theta'}\circ\mathcal{C}^{\theta'},\id)\right]^2}
\geq \min_\theta \left[\mathcal{F}(\mathcal{C}^{\theta},\mathcal{U}^{\theta})\right]^2
= \min_\theta \left[1-\d{\mathcal{C}^\theta,\mathcal{U}^\theta}^2\right]^2.
\end{eqnarray}

We can upper-bound the QFI of the output state $\con(\rho)$ of the $U(1)$-convolved channel $\con$ by the QFI of $\mathcal{C}^\theta(\rho)$ maximized over $\theta$, where $\rho$ is an arbitrary input state.
Namely,
\begin{eqnarray}
\label{eq_fisher_con}
F[\con(\rho)] &=& F\left[\intd{\theta'}{\mathcal{U}^{-\theta'}\circ\mathcal{C}^{\theta+\theta'}(\rho)}\right]
\leq \intd{\theta'}{F\left[\mathcal{U}^{-\theta'}\circ\mathcal{C}^{\theta+\theta'}(\rho)\right]} =
\intd{\theta'}{F\left[\mathcal{C}^{\theta+\theta'}(\rho)\right]} \leq \max_\theta F\left[\mathcal{C}^{\theta}(\rho)\right],
\end{eqnarray}
where in the first inequality we use convexity of the QFI, and then in the following equality we use invariance of the QFI under the $\theta$-independent unitary rotation channel $\mathcal{U}^{-\theta'}$.

Now, we are going to derive the following lower bound on the QFI of $\con(\ketbra{\psi}{\psi})$
\begin{equation}
\label{eq_fisher_psi}
F[\con(\ketbra{\psi}{\psi})] \geq \left[1 - 8\d{\con,\mathcal{U}^\theta}^2\right] (t_+ - t_-)^2,
\end{equation}
which combined with Eqs.~\eqref{eq_fisher_con}~and~\eqref{eq_dist_con} leads to the inequality in our lemma.
Let $\ket{t_-}, \ket{t_1}, \ldots, \ket{t_\tau},\ket{t_+}$ denote all mutually orthogonal
and normalized eigenvectors of $T$, whose corresponding eigenvalues form a non-decreasing sequence,
 i.e., $t_- \leq t_1 \leq \ldots \leq t_\tau \leq t_+$.
Let $\mathcal{H}_\pm = \t{span}(\ket{t_+},\ket{t_-})$ be the space spanned by $\ket{t_+}$ and $\ket{t_-}$ and $\mathcal{H}^\perp_\pm = \t{span}(\ket{t_1},\ldots,\ket{t_\tau})$ be the space orthogonal to $\mathcal{H}_\pm$.
We define a decohering channel $\mathcal{D}$ as follows
\begin{equation}
\mathcal{D}(\cdot) = \Pi_\pm \cdot \Pi_\pm + \Pi^\perp_\pm \cdot \Pi^\perp_\pm,
\end{equation}
where $\Pi_\pm$ and $\Pi^\perp_\pm$ denote the projectors onto $\mathcal{H}_\pm$ and  $\mathcal{H}^\perp_\pm$, respectively.
Note that $\mathcal{D}$ does not depend on the parameter $\theta$ and
$\mathcal{D}\circ \mathcal{U}^\theta = \mathcal{U}^\theta \circ \mathcal{D}$.
Lastly, we choose an orthonormal basis $\mathcal{B} = \{\ket{\psi}, \ket{\psi^\perp}, \ket{t_1}\ldots, \ket{t_\tau}\}$, where $\ket{\psi^\perp} = \frac{\ket{t_+}-\ket{t_-}}{\sqrt{2}}$.

Let $f$ denote the fidelity between $\conx(\ketbra{\psi}{\psi})$ and $\ket{\psi}$, i.e.,
$f = f[\conx(\ketbra{\psi}{\psi}),\ketbra{\psi}{\psi}]$.
Then, we have
\begin{eqnarray}
\label{eq_fid_con}
f^2 &=& \{f[\mathcal{U}^\theta\circ\conx(\ketbra{\psi}{\psi}),\mathcal{U}^\theta(\ketbra{\psi}{\psi})] \}^2
\geq [\mathcal{F}(\con,\mathcal{U}^\theta)]^2
= \left[1-\d{\con,\mathcal{U}^\theta}^2\right]^2 \geq 1-2\d{\con,\mathcal{U}^\theta}^2.
\end{eqnarray}
Also, we can express $\conx(\ketbra{\psi}{\psi})$ in a generic block form in the basis $\mathcal{B}$ as
\begin{equation}
\conx(\ketbra{\psi}{\psi}) = \mat{c|c}{
(f^2+q-f^2 q)\rho_{\pm} &  *\\
\hline
* & (1-q)(1-f^2) \rho^{\perp}_{\pm}},\quad
\rho_{\pm} = \frac{1}{f^2+q-f^2q}\mat{cc}{ f^2 & c \\ c^* & q-f^2 q},
\end{equation}
where $q\in[0,1]$, $c$ is some complex number and the density matrices $\rho_{\pm}$ and $\rho^\perp_{\pm}$ are supported on $\mathcal{H}_\pm$ and  $\mathcal{H}^\perp_\pm$, respectively.
Note that $\rho_{\pm}$ and $\rho^\perp_{\pm}$ do not depend on the parameter $\theta$.
Let us write the generator $T$ in the basis $\mathcal{B}$ as
\begin{equation}
T = T_\pm \oplus T^\perp_\pm,\quad
T_{\pm} = \frac{1}{2}\mat{cc}{t_+ + t_- & t_+ - t_- \\ t_+ - t_- & t_+ + t_-},\quad
T_\pm^\perp = \t{diag}(t_1,\ldots,t_\tau),
\end{equation}
where operators $T_\pm$ and  $T^\perp_\pm$ act on $\mathcal{H}_\pm$ and  $\mathcal{H}^\perp_\pm$.
Then, we have
\begin{eqnarray}
F[\con(\ketbra{\psi}{\psi})] &=& F[\mathcal{U}^\theta\circ\conx(\ketbra{\psi}{\psi})]
\geq F[\mathcal{D}\circ \mathcal{U}^\theta\circ\conx(\ketbra{\psi}{\psi})]
= F[\mathcal{U}^\theta\circ\mathcal{D}\circ \conx(\ketbra{\psi}{\psi})] \\
&=& F\left[(f^2+q-f^2 q) e^{-i\theta T_\pm} \rho_{\pm} e^{i\theta T_\pm}
\oplus (1-q)(1-f^2) e^{-i\theta T^\perp_\pm} \rho^{\perp}_{\pm} e^{i\theta T^\perp_\pm}\right]\\
&=& (f^2+q-f^2 q) F\left(e^{-i\theta T_\pm} \rho_{\pm} e^{i\theta T_\pm}\right) +
(1-q)(1-f^2) F\left(e^{-i\theta T^\perp_\pm} \rho^{\perp}_{\pm} e^{i\theta T^\perp_\pm}\right)\\
&\geq& (f^2+q-f^2 q) F\left(e^{-i\theta T_\pm} \rho_{\pm} e^{i\theta T_\pm}\right),
\end{eqnarray}
where we use the facts that: (i) the QFI does not increase under the parameter-independent channel $\mathcal{D}$
and (ii) for any state with a block-diagonal structure, which is invariant under the parameter change, the QFI is a weighted sum of the QFI corresponding to the respective blocks.
Let $\ket{i}$ be an eigenvector of $\rho_\pm$ corresponding to an eigenvalue $\lambda_i$ for $i=1,2$.
Then, using a general formula for the QFI in a unitary parameter-encoding model we obtain
\begin{equation}
F(e^{-i \theta T_\pm } \rho_\pm e^{i\theta T_\pm})
= \sum_{i,j=1}^2 \frac{2 \lvert\bra{i} T_\pm \ket{j}\rvert^2(\lambda_i - \lambda_j)^2}{\lambda_i + \lambda_j}
= \left(\frac{f^2-q+f^2 q}{f^2+q-f^2 q}\right)^2 (t_+ - t_-)^2,
\end{equation}
which leads to the following lower-bound
\begin{equation}
F[\con(\ketbra{\psi}{\psi})] \geq \frac{(f^2-q+f^2 q)^2}{f^2+q-f^2 q} (t_+ - t_-)^2.
\end{equation}
The above inequality has to hold for any $q\in[0,1]$.
In particular, by setting $q=1$ (which also happens to maximize the right-hand side of the above inequality for $f\geq \frac{1}{2}$) and using Eq.~\eqref{eq_fid_con} we finally arrive at
\begin{equation}
F[\con(\ketbra{\psi}{\psi})] \geq (2f^2-1)^2 (t_+ - t_-)^2 \geq (4f^2-3) (t_+ - t_-)^2
\geq \left[1-8\d{\mathcal{C}^\theta,\mathcal{U}^\theta}^2\right] (t_+ - t_-)^2,
\end{equation}
which gives Eq.~\eqref{eq_fisher_psi} and finishes the proof of our lemma.
\end{proof}

\section{Derivation of fundamental metrological bounds}
\label{sec:bound}

For completeness, we provide a derivation of fundamental metrological bounds along the lines developed in~\cite{Escher2011,Demkowicz2012, Kolodynski2013,Demkowicz2014} with a straightforward generalization, which allows the channels acting on different physical subsystems to be different.
Similarly as in~\cite{Demkowicz2014}, we consider the most general adaptive strategy when deriving the bound.

Consider an action of a quantum channel $\mathcal{N}^\theta$ on some state $\ket{\psi}$, which can be written using its Kraus representation
\begin{equation}
\rho^\theta = \mathcal{N}^{\theta}(\ketbra{\psi}{\psi})  = \sum_k K_k^\theta \ketbra{\psi}{\psi} K_k^{\theta\dagger}.
\end{equation}
Let $W^{\theta}$ be a unitary dilation of $\mathcal{N}^\theta$ on an extended Hilbert space $\mathcal{H}\otimes \mathcal{H}_{E}$,
where $\mathcal{H}_E$ is an ancillary space, such that
\begin{equation}
\mathcal{N}^{\theta}(\ketbra{\psi}{\psi}) = \t{Tr}_E\left( W^{\theta} \ketbra{\psi}{\psi} \otimes \ketbra{0}{0}_E  W^{\theta \dagger}  \right),
\end{equation}
where $\ket{0}$ is some arbitrary state of the ancillary system.
The Kraus operators are related to a particular unitary dilation $W^\theta$ via $K_k^{\theta} = {_E}\bra{k} W^{\theta} \ket{0}_E$.
We can write a purification of the output state as
\begin{equation}
\ket{\Psi^{\theta}} = W^{\theta} \ket{\psi} \otimes \ket{0} = \sum_k K_k^{\theta} \otimes \openone \ket{\psi} \otimes \ket{0}.
\end{equation}
The QFI of $\rho^{\theta}$ is upper-bounded by the QFI of its purification.
Using the formula for the QFI of a pure state we have
\begin{equation}
F(\ket{\Psi(\theta}) = 4 \left(\braket{\dot{\Psi}^{\theta}}{\dot{\Psi}^{\theta}}  - \left|\braket{\dot{\Psi}^{\theta}}{\Psi^{\theta}}\right|^2 \right) =
4 \left(\bra{\psi}\sum_k \dot{K}_k^{\theta \dagger}  \dot{K}_k^{\theta} \ket{\psi} -
\left| \sum_k \bra{\psi}\dot{K}^{\theta \dagger}_k K_k^{\theta} \ket{\psi}\right|^2  \right)
\leq 4 \bra{\psi}\sum_k \dot{K}^{\theta \dagger}\dot{K}_{k}^{\theta} \ket{\psi},
\end{equation}
where $|\dot{\Psi}^\theta\rangle$ and $\dot{K}_k^{\theta}$ denote the derivatives of $\ket{\Psi^\theta}$ and $K_k^{\theta}$ with respect to $\theta$.
Since the above inequality is valid for any purification of the output state, we can obtain the tightest bound by minimizing the right-hand side over different purifications, which, in turn, is equivalent to minimizing over Kraus representations $\{K_k^\theta\}_k$ of $\mathcal{N}^{\theta}$. Thus,
\begin{equation}
F(\rho^{\theta}) \leq 4 \min_{\{K_k^{\theta}\}_k}\bra{\psi}\sum_k \dot{K}_k^{\theta \dagger}  \dot{K}_k^{\theta} \ket{\psi}.
\end{equation}
Finally, we may bound the QFI optimized over input states as follows
\begin{equation}
\label{eq:qfibound}
\max_{\ket{\psi}} F(\rho^{\theta}) \leq 4 \max_{\ket{\psi}} \min_{\{K_k^{\theta}\}_k} \bra{\psi}\sum_k \dot{K}_k^{\theta \dagger}
\dot{K}_k^{\theta} \ket{\psi}
\leq 4 \min_{\{K_k^{\theta} \}_k} \max_{\ket{\psi}} \bra{\psi}\sum_k \dot{K}_k^{\theta \dagger}\dot{K}_k^{\theta} \ket{\psi}
= 4 \min_{\{K_k^{\theta}\}_k} \left\|  \sum_k \dot{K}_k^{\theta \dagger}  \dot{K}_k^{\theta} \right\|,
\end{equation}
where $\| \cdot \|$ is the operator norm and we use the max-min inequality.
As a result, we have an upper bound on the achievable QFI of the output state as a function of the Kraus operators defining the quantum channel itself.

\begin{figure}[t]
\centering
\includegraphics[width=0.55\columnwidth]{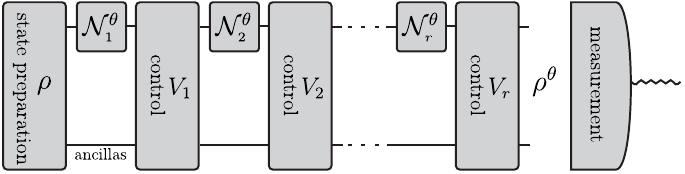}
\caption{
The most general form of an adaptive metrological scheme, which uses $\r$  parameter-encoding channels $\mathcal{N}_1^\theta,\ldots,\mathcal{N}_r^\theta$.
The control unitaries $\{V_i\}_i$ allow to entangle the probe with an arbitrary large ancillary system, which
together with a collective measurement of the output state $\rho^\theta$ guarantees the full generality of the setup.}
\label{fig:adaptive}
\end{figure}

The power of the above formula only becomes evident when one views the channel $\mathcal{N}^{\theta}$ as composed of $r$ independent channels $\mathcal{N}_1^{\theta},\ldots,\mathcal{N}_r^{\theta}$; see Fig.~\ref{fig:adaptive}.
Here we consider the most general adaptive strategy, where each individual channel $\mathcal{N}_i^\theta$ is followed by a unitary $V_i$ that potentially entangles the probe with an arbitrary large ancillary system.
At the end, all the systems are measured using a collective measurement.
In particular, if we appropriately choose the unitaries $V_i$ to be swap operators, then this scheme can encompass  a parallel scheme, where $r$ different subsystems are initially prepared in some arbitrary
(possibly entangled) state and $r$ channels $\mathcal{N}_1,\ldots,\mathcal{N}_r$ act simultaneously on the corresponding subsystems $1,\ldots,r$.

The action of the the channel $\mathcal{N}^{\theta}$ can be written as
\begin{equation}
\rho^{\theta}=\mathcal{N}^{\theta}(\rho) = \sum_{\boldsymbol{k}}K_{\boldsymbol{k}}^{\theta} \rho K_{\boldsymbol{k}}^{\theta \dagger},
\quad K_{\boldsymbol{k}}^{\theta} = K_{\r,k_{\r}}^{\theta}V_{\r} K_{\r-1,k_{\r-1}}^{\theta}V_{\r-1}
\ldots K_{1,k_1}^{\theta} V_1\textrm{ for $\boldsymbol{k}=\{k_1,\dots,k_{\r}\}$},
\end{equation}
where we use a boldface index $\boldsymbol{k}$ to combine all the indices of Kraus operators of individual channels $\mathcal{N}_i$.
Let us for a moment fix the Kraus representation of each channel $\mathcal{N}_i$.
For brevity of the notation, we write $\mathrm{K}_{i,k} := K_{i,k}^{\theta} V_i$, where we suppress the explicit dependence on $\theta$.
According to Eq.~\eqref{eq:qfibound}, the upper bound on the QFI of the output state is
 \begin{equation}
 F(\rho^{\theta}) \leq 4  \left\| \sum_{\boldsymbol{k}} \dot{K}_{\boldsymbol{k}}^{\theta \dagger} \dot{K}_{\boldsymbol{k}}^{\theta}\right\| = 4  \left\|\sum_{\boldsymbol{k}} \sum_{i,j=1}^{\r} \mathrm{K}_{1,k_1}^\dagger  \dots\dot{\mathrm{K}}_{i,k_i}^\dagger \dots \mathrm{K}_{\r,k_{\r}}^\dagger \mathrm{K}_{\r,k_{\r}}  \dots \dot{\mathrm{K}}_{j,k_j} \dots  \mathrm{K}_{1,k_1} \right\|.
 \end{equation}
 Note the following property of the operator norm
 \begin{equation}
 \label{eq:normineq}
 \left\|{\sum_k L_{k}^{\dagger} A L_k}\right\| \leq \| A \|\,  \left\| \sum_k L_k^{\dagger} L_k \right\|
 \end{equation}
 valid for any operators $A$ and $L_k$. Making use of the above  property together with the triangle inequality and
  the trace preserving condition $\sum_k \mathrm{K}_{k}^{\dagger} \mathrm{K}_k = \openone$ we get
  \begin{equation}
  \label{eq:fishbound1}
 F(\rho^{\theta}) \leq 4 \sum_i \left\| \sum_{k_i}\dot{\mathrm{K}}_{i,k_i}^{\dagger} \dot{\mathrm{K}}_{i,k_i}^{ \dagger} \right\|
  +\sum_{i<j} \left\| \sum_{k_i,\dots,k_j}\dot{\mathrm{K}}^{k_i \dagger}_{i} \dots {\mathrm{K}}^{k_j \dagger}_{j} \dot{\mathrm{K}}^{k_j}_j
  \dots {\mathrm{K}}^{k_i}_{i}  + h.c.\right\|.
\end{equation}
When analysing  the second summation term, note that the trace preserving condition implies that an operator $iA = \sum_k \dot{\mathrm{K}}^{k \dagger} \mathrm{K}^k$ is anti-hermitian.
Then, we have
\begin{equation}
\left\|\sum_k \dot{\mathrm{K}}^{k \dagger}  i A \mathrm{K}^k + h.c.\right\| 
= \left\| \sum_k (\dot{\mathrm{K}}^k + i \mathrm{K}^k)^\dagger A (\dot{\mathrm{K}}^k + i \mathrm{K}^k) -
\dot{\mathrm{K}}^{k \dagger}  A \dot{\mathrm{K}}^k - {\mathrm{K}}^{k \dagger}  A \mathrm{K}^k \right\|.
\end{equation}
Using the triangle inequality together with Eq.~\eqref{eq:normineq} we get
\begin{equation}
\left\|\sum_k \dot{\mathrm{K}}^{k \dagger}  i A \mathrm{K}^k + h.c.\right\| \leq 2
\| A\|\left(\left\|\sum_k \dot{\mathrm{K}}^{k \dagger} \dot{\mathrm{K}}^k\right\| + \left\| \sum_k \dot{\mathrm{K}}^{k \dagger}
 \mathrm{K}^{k \dagger} \right\| +1\right).
\end{equation}
This leads us to the final bound on the QFI, where we include a minimization over the Kraus representations
\begin{equation}
\label{eq:unversalbound}
F(\rho^{\theta}) \leq 4  \left(\min_{\{K_{i,k}^{\theta}\}}
\sum_i \|\alpha_i\| + \sum_{i \neq j} \|\beta_i\| (\|\alpha_j\| + \| \beta_j\| + 1)\right),
\quad \alpha_i = \sum_k \dot{K}_{i,k}^{\theta \dagger}\dot{{K}}_{i,k}^{\theta},
\quad \beta_i= \sum_{k} \dot{{K}}_{i,k}^{\theta \dagger}  K_{i,k}^{\theta}.
\end{equation}
Note that we can substitute $\mathrm{K}_{i,k}$ with $K_{i,k}^{\theta}$ as this replacement amounts to multiplying Kraus operators by unitary operations which does not affect the operator norms in Eq.~\eqref{eq:unversalbound}.

For typical models of decoherence, such as the erasure or depolarizing noise, one can find Kraus representations $\{ K^\theta_{i,k}\}_k$ satisfying $\beta_i=0$, which leads to powerful upper bounds that scale linearly in the number of subsystems, i.e.,
\begin{equation}
\label{eq:unversalboundbeta0}
F(\rho^{\theta}) \leq 4  \sum_i \min_{\{K_{i,k}^{\theta}\}_k, \beta_i=0} \|\alpha_i\| \leq 4 \r \max_i \min_{\{K_{i,k}^{\theta}\}_k, \beta_i=0} \|\alpha_i\|.
\end{equation}
This, in turn, indicates the impossibility of achieving the Heisenberg scaling of precision
\cite{Fujiwara2008, Escher2011,Demkowicz2012, Kolodynski2013,Demkowicz2014}.
Note that we can also arrive at the same conclusion as long as $\beta_i = 0$ for all except for some constant number of channels.

In order to obtain an explicit bound one needs to perform a minimization over Kraus representations $\{K_{i,k}^\theta\}_k$ for each individual channel $\mathcal{N}_i$.
This can  be easily done numerically using a semi-definite program as described in \cite{Demkowicz2012, Kolodynski2013, Demkowicz2014, Demkowicz2017, Zhou2018, Zhou2020, Zhou2020}.
Here, we will not discuss this optimization procedure; rather, we simply provide the optimal Kraus representations
for the erasure and depolarizing noise.
We remark that since the bound is given in terms of a minimization over Kraus representations, any Kraus
representation provides a valid bound.

Note that in our work we consider the adaptive scheme as depicted in Fig.~\ref{fig_schemes}(a) with the total number of channels $r = n m$, where $n$ represents the number of physical subsystems on which respective parameter-encoding channels act in parallel, and $m$ is the number of adaptive steps.
As such, Fig.~\ref{fig_schemes}(a) is a special case of the more general Fig.~\ref{fig:adaptive}.

\subsection{Erasure noise}

In order to describe the erasure channel $\mathcal{N}_e$ acting on a $d$-dimensional subsystem it is convenient to introduce an additional level of the subsystem labeled by $\ket{d+1}$.
This way we can formally work with the same number of subsystems and the fact that a given subsystem is lost is simply indicated by its internal state being $\ket{d+1}$.
We therefore work using the basis $\{\ket{1},\dots,\ket{d}, \ket{d+1}\}$.
The canonical Kraus operators for the erasure noise are
\begin{equation}
K_0 = \sqrt{1-p} \sum_{i=1}^d \ketbra{i}{i},\quad
K_{d+1} =\ketbra{d+1}{d+1},\quad
K_k = \sqrt{p} \ketbra{d+1}{k}\textrm{ for $k=1,\dots, d$},
\end{equation}
where $p$ is the loss probability, $K_0$ represents the event of no loss, $K_{k}$ represents the event when the subsystem is in a state $\ket{k}$, which is lost, and $K_{d+1}$ represents simply the fact that a lost subsystem remains lost.

When the unitary parameter encoding with a generator $T = \t{diag}(t_1,\dots,t_d)$ is additionally considered we get
\begin{equation}
K_k^{\theta} = K_k e^{-i T \theta}.
\end{equation}
Note that the generator $T$ should be formally understood as $T \oplus 0$, where $0$ stands for the lack of phase encoding on the lost subsystem.
Without loss of generality we assume that the generator $T$ is shifted in a way that the maximal $t_+$ and minimal $t_-$ eigenvalues have the same absolute values, i.e., $t_+ = -t_- = \Delta T/2$.

The above Kraus representation is not helpful with obtaining strong metrological bounds as $\beta \neq 0$
and the bound would scale quadratically in the number of subsystems.
However, we can consider an equivalent Kraus representation
\begin{equation}
\tilde{K}_0^{\theta} = K_0^{\theta}, \quad  \tilde{K}_k^{\theta} = e^{i c_k \theta} K_k^{\theta}, \quad \tilde{K}_{d+1}^{\theta} = K_{d+1}^{\theta}.
\end{equation}
When evaluated at $\theta = 0$, this representation results in operators $\alpha$ and $\beta$ of the form
\begin{equation}
\beta = i  \sum_{k=1}^{d+1} (t_k-p c_k) \ketbra{k}{k}, \quad \alpha = \sum_{k=1}^{d+1} [(1-p) t_k^2+p(c_k-t_k)^2]\ketbra{k}{k}.
\end{equation}
In order to have $\beta=0$ we need to choose $c_k = t_k/p$.
Then, the diagonal elements of $\alpha$ are
$t_k^2 (1-p)/p$ and the operator norm of $\alpha$ corresponds to their largest absolute value, which is
$t_+^2 (1-p)/p = (\Delta T)^2 (1-p)/(4p)$.
Thus, when we consider $r$ erasure channels, each with the corresponding loss probability $p_i$,
the QFI of the output state of an arbitrary adaptive strategy that involves using these $r$ channels
is upper-bounded by
\begin{equation}
\label{eq:qfierasure}
F(\rho^{\theta}) \leq F^{\uparrow}_e =  \sum_{i=1}^{\r} (\Delta T_i)^2 \frac{1-p_i}{p_i},
\end{equation}
where $\Delta T_i$ is the spectrum spread of the generator $T_i$ acting on the subsystem $i$.

\subsection{Depolarizing channel}

The simplest way to derive a bound for the depolarizing channel $\mathcal{N}_{d}$ is to note that it can be regarded as a composition of the erasure channel $\mathcal{N}_{e}$ described in the previous subsection with a channel that takes the state $\ket{d+1}$ and returns the maximally mixed state on the $d$-dimensional Hilbert space spanned by $\ket{1},\dots,\ket{d}$ while on other states it acts as the identity channel.
Since the QFI does not increase under the action of any parameter-independent channels,
any bound derived for the erasure channel with a given loss probability $p$ is valid for the depolarization channel, which replaces the state with the maximally mixed state with probability $p$.
However, such a bound is not tight.
To obtain a better bound one needs to perform a minimization over Kraus representations of the actual channel.
Below we provide an analytical form of the optimal bound for $d=2$.
This tighter bound may be used in Eq.~\eqref{eq_thm_bound} in the main text provided physical subsystems $A_i$ correspond to qubits.

Kraus operators of the qubit isotropic depolarizing model are
\begin{equation}
K_{0}=\sqrt{1- \frac{3p}{4}} \openone, \quad K_k=\sqrt{\frac{p}{4}}\,\sigma^{k} \textrm{ for $k=1,2,3$},
\end{equation}
where $\sigma^k$ are Pauli matrices, and $1-p$ is the effective shrinking factor of the qubit's Bloch vector.
We consider the unitary encoding operation, which rotates the qubit around the $z$ axis, i.e., $U^{\theta} = \exp(- i T \theta )$ with the generator $T = \sigma^Z \Delta T/2$, where $\Delta T$ represent the spectrum spread of the generator. The effective Kraus operators are
$K_k^{\theta} = K_k e^{-i T \theta}$. This representation again does not yield $\beta=0$, and
we need to find another representation to obtain a tighter bound. The optimal Kraus representation has the following structure
\begin{equation}
\mat{c}{\tilde{K}_0^{\theta}\\
\tilde{K}_1^{\theta}\\
\tilde{K}_2^{\theta}\\
\tilde{K}_3^{\theta}
} = \mat{cccc}{\cos (a\theta)& 0 & 0 & i\sin(a\theta) \\
0& \cos(b\theta) & -  \sin (b\theta) & 0 \\
0& \sin (b \theta) & \cos(b\theta) & 0 \\
i\sin(a\theta)& 0 & 0 & \cos(a\theta)}
\mat{c}{{K}_0^{\theta}\\
{K}_1^{\theta}\\
{K}_2^{\theta}\\
{K}_3^{\theta}}.
\end{equation}
Then, the operators $\alpha$ and $\beta$ are
\begin{equation}
\alpha =\frac{1}{4}\left((\Delta T)^2  +\Delta T\left(2 b p - 2a \sqrt{p(4-3p)}\right)+ 2 b^2p - 2a^2(p-2)\right) \openone, \quad \beta = \frac{i}{2}\left( \Delta T + b p -a \sqrt{p(4 - 3p)}\right) \sigma^Z.
\end{equation}
In order to have $\beta=0$ we set $a= (b p +\Delta T)/\sqrt{p(4-3p)}$.
Then, the minimal value of the norm of $\alpha$ is admitted for $b=(2-p)\Delta T/(2p)(2p-3)$
and equals $\|  \alpha \| = (\Delta T)^2(1-p)^2/(2p(2p-3))$.
Thus, given $\r$ different subsystems the fundamental bound on estimating $\theta$ in the presence of the depolarizing noise is
\begin{equation}
F[\rho^{\theta}] \leq F^{\uparrow}_d = \sum_{i=1}^{\r} (\Delta T_i)^2 \frac{2 (1-p_i)^2}{p_i(2p_i -3)},
\end{equation}
where $p_i$ is the strength of the depolarizing noise on the subsystem $i$.
If $p_i < 1/2$, then the right-hand side is indeed smaller than the right-hand side of Eq.~\eqref{eq:qfierasure} and hence the depolarizing bound is tighter than the erasure bound.

\twocolumngrid
\bibliography{EK_metro}
\end{document}